%
%
%
%
\documentclass[fleqn]{amsart}

\usepackage{amssymb}
\usepackage{amsfonts}
\usepackage[english]{babel}
\usepackage{bbm}
\usepackage{xfrac}
\usepackage{color}
\usepackage{url, hyperref}

\usepackage{mathrsfs}

\usepackage[all]{xy}
\SelectTips{cm}{}

\let\mathcal\mathscr
\usepackage[mathscr]{euscript}

\newtheorem{theorem}{Theorem}[section]

\newtheorem{proposition}{Proposition}[section]

\theoremstyle{definition}

\newtheorem*{coordinates}{Coordinates}

\theoremstyle{remark}
\newtheorem{remark}{\textsc{Remark}}

\numberwithin{equation}{section}

\let \kappa=\varkappa
\let \phi=\varphi

\newcommand{\cprime}{\/{\mathsurround=0pt$'$}}
\newcommand{\Cd}{\mathbf{X}}
\newcommand{\ldb}{[\![}
\newcommand{\rdb}{]\!]}

\newcommand{\fl}{\hspace*{-\mathindent}}

\newcommand*{\eval}[1]{\left.#1\right|}

\DeclareMathOperator{\const}{const}
\DeclareMathOperator{\sym}{sym}
\DeclareMathOperator{\rank}{rank}
\DeclareMathOperator{\Ev}{\mathbf{E}}

\allowdisplaybreaks[4]

\newcommand{\abs}[1]{\lvert#1\rvert}


\begin{document}

\title[Breaking soliton equation]{The Calogero--Bogoyavlenskii--Schiff
  breaking soliton equation: recursion operators and
  \\
  higher symmetries}

\author{I.S.~Krasil'shchik} \address{Trapeznikov Institute of Control
  Sciences, 65 Profsoyuznaya street, Moscow 117997, Russia}
\email{josephkra@gmail.com}\thanks{ The work of ISK was partially supported by
  the RSF Grant 21-71-20034.}
\author{O.I.~Morozov} 
\address{Trapeznikov Institute of Control
  Sciences, 65 Profsoyuznaya street, Moscow 117997, Russia}
\email{oimorozov@gmail.com} 

\subjclass[2020]{37K10, 35B06, 37J06} 

\date{ \today 
}


\keywords{Equation of breaking solitons, nonlinear PDEs, integrability,
  symmetries, conservation laws, recursion operators}

\begin{abstract}
  We find two one-parametric families of recursion operators and use them to
  construct higher symmetries for the Calogero--Bogoyavlenskii--Schiff
  breaking soliton equation. Then we prove that the recursion operators from
  the first family pair-wise commute with respect to the Nijenhuis bracket
  (are compatible).
\end{abstract}

\maketitle


\section{Introduction}

In this paper we consider the Calogero--Bogoyavlenskii--Schiff breaking
soliton equation ({\sc CBS}), \cite{Bogoyavlenskii, Schiff}. This 3D equation
depends on several  numerical parameters and under a certain choice of them
is Lagrangian. It has a number of remarkable properties. First of all, it
admits Lax representations with either a non-removable real-valued
parameter~\cite{Schiff} or a functional
parameter~\cite{Bogoyavlenskii}. In~\cite{Schiff}, this equation is obtained
by applying an integro-differential recursion operator of the KdV equation to
the symmetry $u_y$ (see the footnote at~\cite[p.~401]{Schiff}), the
Painlev{\'{e}} property was checked, and the Hirota bilinear form was
presented and implemented to construct a two-soliton solution.  An infinite
series of local conservation laws for {\sc CBS} was derived via the
generalized Miura transformation in~\cite{Bogoyavlenskii}.
   
In our paper, we revise some integrability properties of {\sc CBS} via the
technique developed in~\cite{JK-Pavlov-M, JK-OM, JK-AV, KVV}. We construct a
Lax representation with two non-removable parameters and derive two families
of recursion operators for symmetries of {\sc CBS}.  Since this equation is
Lagrangian, the recursion operators are most probably Hamiltonian and
symplectic ones simultaneously, though we have no well-defined techniques to
check these properties for general nonlocal operators. Then we study the
action of the first family to the symmetries of {\sc CBS} and obtain an
infinite collection of symmetries of increasing odd order.  Furthermore, we
prove that the recursion operators from the first family pair-wise commute
with respect to the Nijenhuis bracket. Using this fact, we prove commutativity
of the above-mentioned family of symmetries.

We follow the definitions and notation from~\cite{JK-Pavlov-M, JK-OM, JK-AV}
and recall briefly the main definitions and notation
(Section~\ref{sec:preliminaries}).

\section{Preliminaries and notation}
\label{sec:preliminaries}

Let us recall very briefly the main definitions and constructions following
the books~\cite{factorial, KVV} and the paper~\cite{trends}.

We consider a smooth vector bundle $\pi\colon E\to M$, $\dim M = n$ (the
number of independent variables), $\rank \pi = m$ (the number of unknown
functions), and infinitely prolonged
equations~$\mathcal{E}\subset J^\infty(\pi)$, where~$J^\infty(\pi)$ denotes
the space of infinite jets. Let~$\zeta\colon B\to M$ be another vector bundle
and~$\pi_\infty^*(\zeta)$ denote the pull-back, $P =
\Gamma(\pi_\infty^*(\zeta))$ being the module of its sections. We always
assume that $\mathcal{E}$ is given by some element~$F\in P$: $\mathcal{E} =
\{F = 0\}$.

The space~$\mathcal{E}$, is endowed with the $\pi_\infty$-horizontal
$n$-dimensional Frobenius integrable distribution
$\mathcal{C}\colon \theta\to \mathcal{C}_\theta\subset
T_\theta\mathcal{E}$. Its maximal integral manifolds are solutions
of~$\mathcal{E}$. One-forms vanishing on~$\mathcal{C}$ are called Cartan forms
and their space is denoted by~$\Lambda_v^1$, while the corresponding Grassmann
algebra is~$\Lambda_v^*$. The latter is endowed with the Cartan differential
$d_v\colon \Lambda_v^*\to \Lambda_v^*$, $d_v^2 = 0$.

A $\pi_\infty$-vertical vector field $S$ on~$\mathcal{E}$ is a symmetry of the
latter if it preserves the Cartan distribution, i.e.,
$[X, \mathcal{C}]\subset \mathcal{C}$. Symmetries form a Lie
$\mathbb{R}$-algebra~$\sym(\mathcal{E})$. Let $\ell_F\colon \kappa\to P$
denote the linearization (Fr\'{e}chet derivative) of~$F$,
$\kappa = \Gamma(\pi_\infty^*(\pi))$,
and~$\ell_{\mathcal{E}} = \eval{\ell_F}_{\mathcal{E}}$ be its restriction
to~$\mathcal{E}$. Then there is a one-to-one correspondence
between~$\sym(\mathcal{E})$ and solutions of the equation
$\ell_{\mathcal{E}}(\phi) = 0$. The symmetry~$S$ that corresponds to~$\phi$ is
denoted by~$\Ev_\phi$ and~$\phi$ is called the generating section of~$S$. The
identity $[\Ev_\phi, \Ev_{\phi'}] = \Ev_{\{\phi, \phi'\}}$ defines a bracket
in the space of generating sections, which is called the Jacobi bracket. We do
not distinguish between symmetries and their generating sections in what
follows. Solutions of~$\ell_{\mathcal{E}}^*(\psi) = 0$ are called cosymmetries
of~$\mathcal{E}$, where~$\ell_{\mathcal{E}}^*$ stands for the formally adjoint
operator.

Let~$\mathcal{E}$, $\tilde{\mathcal{E}}$ be equations. A smooth map
$\tau\colon \tilde{\mathcal{E}}\to \mathcal{E}$ is their morphism if
$\tau_{*,\theta}(\tilde{\mathcal{C}}_\theta) \subset \mathcal{C}_{\tau(\theta)}$,
$\theta\in \tilde{\mathcal{E}}$. A morphism is
called a (differential) covering if the restriction
$\eval{\tau_{*,\theta}}_{\tilde{\mathcal{C}}_\theta}$ is an isomorphism for
any~$\theta$. Symmetries of~$\tilde{\mathcal{E}}$ are called nonlocal ones
for~$\mathcal{E}$. Solutions of the equation $\tilde{\ell}_{\mathcal{E}}(\phi)
= 0$, where $\tilde{\ell}_{\mathcal{E}}$ is the natural lift
of~$\ell_{\mathcal{E}}$ to~$\tilde{\mathcal{E}}$, are called (nonlocal)
shadows.

The diagram
\begin{equation*}
  \xymatrix{
    \mathcal{E}_1&\ar[l]_{\tau_1}\tilde{\mathcal{E}}
    \ar[r]^{\tau_2}&\mathcal{E}_2, 
  }
\end{equation*}
where~$\tau_1$, $\tau_2$ are coverings, is called a B\"{a}cklund
transformation between~$\mathcal{E}_1$ and~$\mathcal{E}_2$ and B\"{a}cklund
auto-transformation if~$\mathcal{E}_1 = \mathcal{E}_2$. Given a
solution~$\tilde{s} \subset \tilde{\mathcal{E}}$ of~$\tilde{\mathcal{E}}$, the
projections $s_i = \tau_i(\tilde{s}) \subset \mathcal{E}_i$ are solutions
of~$\mathcal{E}_i$, $i = 1$, $2$.

Consider the tangent bundle $t\colon T\mathcal{E} \to E$ and its subbundle
$c\colon \mathcal{C}\to \mathcal{E}$ associated with the Cartan distribution
on~$\mathcal{E}$. The quotient bundle
$\mathbf{t}\colon \mathcal{T}\mathcal{E} = T\mathcal{E}/\mathcal{C} \to
\mathcal{E}$ is called the tangent covering of~$\mathcal{E}$, while
$\mathcal{T}\mathcal{E}$ is the tangent equation. The tangent covering
possesses the following characteristic property: any section
$\phi\colon \mathcal{E}\to \mathcal{T}\mathcal{E}$ which is a morphism of
equations is identified with a symmetry of~$\mathcal{E}$ and vice versa. Thus,
B\"{a}cklund auto-transformations of~$\mathcal{T}\mathcal{E}$ can be
understood as recursion operators for symmetries of~$\mathcal{E}$.

Recursion operators of this form can be constructed using the following
procedure. Let $\tau\colon W\to \mathcal{T}\mathcal{E}$ be a
covering and~$\rho$ be a shadow in the composition
\begin{equation*}
  \xymatrix{
    \tau_1\colon W\ar[r]^{\tau}&\mathcal{T}\mathcal{E}\ar[r]^{\mathbf{t}}&
    \mathcal{E}. 
  }
\end{equation*}
Then~$\rho$ determines another covering~$\tau_\rho = \tau_2$ which delivers
the desired B\"{a}cklund auto-transformation.

In what follows, we treat~$\mathcal{T}\mathcal{E}$ as a super-manifold with
odd (of parity~$1$) fibers. Then~$\Lambda_v^*$ may be understood as the
algebra of functions on~$\mathcal{T}\mathcal{E}$, while the Cartan
differential becomes a nilpotent vector field denoted by~$\Cd$. Let $S_i =
\Ev_{\phi_i}$, $i = 1$, $2$, be (super)symmetries of~$\mathcal{T}\mathcal{E}$
of parities~$p_i$. Then their super-commutator is a symmetry as well:
\begin{equation}\label{eq:9}
  [S_1, S_2] = S_1\circ S_2 - (-1)^{p_1p_2}S_2\circ S_1 = \Ev_\phi.
\end{equation}
The generating section~$\phi = \ldb \phi_1,\phi_2\rdb$ is the Nijenhuis
bracket of~$\phi_1$ and~$\phi_2$. When $S_i$ are of zero parity, the Nijenhuis
bracket coinsides with the Jacobi one. Note that~$\Cd$ is a symmetry that
commutes with all the others.

\begin{coordinates}
  Let~$\mathcal{U}\subset M$ be a coordinate neighbourhood with local
  coordinates $x^1,\dots,x^n$ and
  $\pi^{-1}(\mathcal{U}) \simeq \mathcal{U}\times \mathbb{R}^m$ be a
  trivialization of~$\pi$ with coordinates $u^1,\dots,u^m$
  in~$\mathbb{R}^m$. Then adapted coordinates~$u_\sigma^j$ arise
  in~$J^\infty(\pi)$, where~$\sigma$ is a symmetric multi-index. The Cartan
  distribution on~$J^\infty(\pi)$ is spanned by the vector fields
  \begin{equation*}
    D_i = \frac{\partial}{\partial x^i} + \sum_{j,\sigma}u_{\sigma i}^j
    \frac{\partial}{\partial u_\sigma^j}
  \end{equation*}
  called the total derivatives. The Cartan forms are~$\omega_\sigma^j =
  du_\sigma^j - \sum_i u_\sigma^j\,dx^i$, while the Cartan differential is
  defined by
  \begin{equation*}
    d_v(f) = \sum_{j,\sigma}\frac{\partial f}{\partial u_\sigma^j}\omega_\sigma^j.
  \end{equation*}
  In particular, $d_v(u_\sigma^j) = \omega_\sigma^j$.

  Let~$\mathcal{E}$ be given by a section $F = (F^1,\dots, F^r)$. Then the
  infinite prologation is the system $D_\sigma(F^j) = 0$, where~$D_\sigma$
  denotes the composition of the total derivatives corresponding to~$\sigma$,
  $\abs{\sigma}\geq 0$, $j = 1,\dots,r$. To restrict various objects
  to~$\mathcal{E}$, we choose internal coordinates in~$\mathcal{E}$.

  The Lie algebra $\sym(\mathcal{E})$ consists of the vector fields
  \begin{equation*}
    \Ev_\phi = \sum D_\sigma(\phi^j)\frac{\partial}{\partial u_\sigma^j},
  \end{equation*}
  where (and everywhere below) summation is accomplished over the set of
  internal coordinates and $\phi = (\phi^1,\dots,\phi^m)$ enjoys the system
  \begin{equation*}
    \sum\frac{\partial F^\alpha}{\partial u_\sigma^\beta}D_\sigma(\phi^\beta)
    = 0, \quad \alpha = 1,\dots,r.
  \end{equation*}
  The Jacobi bracket of~$\phi_1$, $\phi_2$ is~$\phi = (\phi^1,\dots,\phi^m)$
  with
  \begin{equation*}
    \phi^j = \sum\left(D_\sigma(\phi_1^\alpha)\frac{\partial
        \phi_2^j}{\partial u_\sigma^\alpha} -
      D_\sigma(\phi_2^\alpha)\frac{\partial \phi_1^j}{\partial
        u_\sigma^\alpha}\right) .
  \end{equation*}

  Let $\tau\colon \tilde{\mathcal{E}}\to \mathcal{E}$ be a covering,
  $\mathcal{E}$ be a chart with internal coordinates $x^i$, $u_\sigma^j$ and
  $\tau^{-1}(\mathcal{U})\simeq \mathcal{U}\times \mathbb{R}^l$ a
  trivialization with coordinates~$w^\alpha$ (nonlocal variables) in the
  fibers. The equation structure in~$\tilde{\mathcal{E}}$ is given by
  \begin{equation*}
    \tilde{D}_i = D_i + X_i,\quad X_i = \sum_\alpha
    X_i^\alpha\frac{\partial}{\partial w^a},
  \end{equation*}
  where the $\tau$-vertical vector fields~$X_i$ enjoy the conditions
  \begin{equation*}
    D_i(X_j) - D_j(X_i) + [X_i, X_j] = 0,\quad i<j.
  \end{equation*}
  The equation~$\tilde{\mathcal{E}}$ is the overdetermined system
  \begin{equation*}
    \frac{\partial w^\alpha}{\partial x^i} = X_i^\alpha
  \end{equation*}
  whose compatibility conditions coincide with~$\mathcal{E}$. Any linear
  differential operator in total derivatives on~$\mathcal{E}$ is lifted
  to~$\tilde{\mathcal{E}}$ by changing~$D_i$ to~$\tilde{D}_i$. This lift is
  well defined.

  If $\mathcal{E}_i = \{F_i[u_i] = 0\}$ are equations in unknowns~$u_i$, then a
  B\"{a}cklund transformations that relates their solutions is an equation
  $\mathcal{E} = \{F[u_1, u_2] = 0\}$ such that $F_1[u_1] = 0$ and $F[u_1,
  u_2] = 0$ implies $F_2[u_2] = 0$ and vice versa.

  For an equation $\mathcal{E} = \{F[u] = 0\}$, its tangent equation is
  \begin{equation*}
    F[u] = 0, \quad \ell_F(p) = 0,
  \end{equation*}
  where parity of~$p = (p^1,\dots,p^m)$ equals~$1$ and $\mathbf{t}(u, p) =
  u$. The action of the canonical nilpotent field~$\Cd$ is completely defined
  by the equalities
  \begin{equation*}
    \Cd(u_\sigma^j) = p_\sigma^j,\quad \Cd(p_\sigma^j) = 0.
  \end{equation*}

  Let
  \begin{equation}\label{eq:11}
    \frac{\partial w^\alpha}{\partial x^i} = \sum_{j,\alpha,\sigma}
    X_{i,j}^{\alpha,\sigma} p_\sigma^j 
  \end{equation}
  be a covering~$\tau_1$ over~$\mathcal{T}\mathcal{E}$, where~$w^\alpha$ are
  nonlocal variables of parity~$1$ and~$ X_{i,j}^{\alpha,\sigma}$ are
  functions on~$\mathcal{E}$. Assume that $S = \Ev_\phi$ is a shadow with
  \begin{equation}\label{eq:10}
    \phi^j = \sum_{\beta,\sigma}\phi_\beta^{j,\sigma}p_\sigma^\beta +
    \sum_{\alpha}\phi_\alpha^jw^\alpha,
  \end{equation}
  where $\phi_\beta^{j,\sigma}$, $\phi_\alpha^j$ are functions
  on~$\mathcal{E}$. Consider another copy of~$\mathcal{T}\mathcal{E}$ with
  fiber-wise coordinates~$\bar{p}$. Then the map defined by
  \begin{equation}
    \label{eq:12}
    \bar{p}^j = \sum_{\beta,\sigma}\phi_\beta^{j,\sigma}p_\sigma^\beta +
    \sum_{\alpha}\phi_\alpha^jw^\alpha
  \end{equation}
  is a covering~$\tau_2$ over~$\mathcal{T}\mathcal{E}$ and thus the
  system~\eqref{eq:11}, \eqref{eq:12} delivers a recursion operator.

  Assume that we have two shadows of the form~\eqref{eq:10} and the can be
  lifted to the covering~$\tau_1$, i.e., there exist nonlocal
  symmetries~$S_1$, $S_2$ such that their restrictions to~$\mathcal{E}$
  coincide with the corresponding shadow. Then, using a straightforward
  generalization of~\eqref{eq:9}, one can construct their Nijenhuis
  bracket. Details of computations are exemplified in
  Section~\ref{sec:nijenhuis-brackets}.
\end{coordinates}

\section{The breaking soliton equation and its coverings}
\label{section_1}

The Calogero--Bogoyavlenskii--Schiff breaking soliton equation, as it is
presented in~\cite{Bogoyavlenskii}, belongs to the family of equations
\begin{equation}
  \label{eq:1}
  s u_{xxxy} - k u_x u_{xy} - m u_y u_{xx} + u_{xt} = 0,
\end{equation} 
where $s\neq 0$, $k$, and~$m$ are real constants. Computations of symmetries
and cosymmetries of Equation~\eqref{eq:1} reveal the following inequivalent
cases: (a) the generic one, (b) $m = 0$, (c) $m = k$, (d) $k = 2m$. Below we
mainly deal with the last case which is the most interesting one. Note that
the equation is Lagrangian with the Lagrangian density
\begin{equation*}
  L = \frac{1}{2}(m u_x^2 u_y - u_x u_t  + s u_{xx}u_{xy})
\end{equation*}
in this case. In what follows, we present~\eqref{eq:1} in the form
\begin{equation}
  u_{tx}=  2\,u_y\,u_{xx}+4\,u_x\,u_{xy} -u_{xxxy}.
\label{main_eq}
\end{equation}

\begin{remark}
  The generic case is of no interest in our context. When~$m = 0$, the
  equation acquires the form
  \begin{equation*}
    su_{xxxy} - \frac{k}{2}(u_x^2)_y + u_{xt} = 0
  \end{equation*}
  and after the substitution~$u_x = v$ transforms to
  \begin{equation}\label{eq:2}
    sv_{xxy} - \frac{k}{2}(v^2)_y + v_{t} = 0
  \end{equation}
  which is a rather straightforward 3D-generalization of the pKdV
  equation. Finally, in the case~(c) we have
  \begin{equation*}
    s u_{xxxy} - k u_x u_{xy} - k u_y u_{xx} + u_{xt} = (su_{xxy} - ku_xu_y +
    u_t)_x =0,
  \end{equation*}
  i.e., we arrive to a generalization
  \begin{equation}
    \label{eq:3}
    su_{xxy} - ku_xu_y + u_t = 0
  \end{equation}
  of the KdV equation.

  It is interesting to note that symmetry algebras of Equations~\eqref{eq:2}
  and~\eqref{eq:2} do not contain functional parameters, contrary to the
  many other multi-dimensional equations. \hfill$\diamond$
\end{remark}

Equation~\eqref{main_eq} admits the covering
\begin{equation}
  \left\{
    \begin{array}{lcl}
      q_t &=& 4\, H\,q_y+2\,u_y\,q_x+(H_y\,-u_{xy})\,q,
      \\
      q_{xx} &=& (u_x-H)\,q,
    \end{array}
  \right.
\label{main_covering}
\end{equation}
where $H=H(t,y)$ is a solution to the Hopf equation $H_t = 4\,H\,H_y$
(see~\cite{Bogoyavlenskii}, cf.~\cite{LiYiShen1994}). A particular case
$H=\lambda = \const$ was considered in~\cite{Schiff}.

When $H=0$, this system acquires the form
\begin{equation}
  \left\{
    \begin{array}{lcl}
      q_t &=& 2\,u_y\,q_x-u_{xy}\,q,
      \\
      q_{xx} &=& u_x\,q.
    \end{array}
  \right.
  \label{zero_covering}
\end{equation} 
The contact symmetry algebra of Equation~\eqref{main_eq} is generated by the
functions
\begin{equation*}
  \begin{array}{rcl}
    \varphi_0(A) &=& -A\,u_x-\frac{1}{2}\,A_t\,y,
    \\
    \varphi_1(A) &=& A,
    \\
    \psi_1 &=& -u_t,
    \\
    \psi_2 &=& -t\,u_t-\frac{1}{2}\,x\,u_x-\frac{1}{2}\,u,
    \\
    \psi_3 &=& -t^2\,u_t-\frac{1}{2}\,t\,x\,u_x-t\,y\,u_y-\frac{1}{2}\,t\,u-\frac{1}{4}\,x\,y,
    \\
    \psi_4 &=& \frac{1}{2}\,x\,u_x-y\,u_y+\frac{1}{2}\,u,
    \\
    \psi_5 &=& - u_y,
    \\
    \psi_6 &=& - t\,u_y-\frac{1}{4}\,x,
  \end{array}
\end{equation*}
where $A = A(t)$ is an arbitrary (smooth) function.

Symmetries $\psi_3$ and $\psi_6$ have no lifts to symmetries of
system~\eqref{zero_covering} and the action of the exponent of the linear
combination of the associated vector fields on the Wahlquist--Estabrook form
of~\eqref{zero_covering} generates the family of coverings
\begin{equation}
  \left\{
    \begin{array}{lcl}
      q_t &=&
              \displaystyle{
              \frac{\lambda+\mu\,y}{1-\mu\,t}\,q_y+2\,u_y\,q_x-\left(u_{xy}-\frac{\mu}{4\,(1-\mu\,t)}\right)\,q},
      \\
      q_{xx} &=& \displaystyle{\left(u_x-\frac{\lambda+\mu\,y}{4\,(1-\mu\,t)}\right)\,q}
    \end{array}
  \right.
  \label{two_parameter_covering}
\end{equation}
with two non-removable parameters $\lambda$ and $\mu$.  This system is the
particular form of system \eqref{main_covering} with
$H=\frac{1}{4}\,(\lambda+\mu\,y)\,(1-\mu\,t)^{-1}$.

The function $v =(1-\mu\,t)^{-1/2}\,q^2$ is a shadow of a nonlocal symmetry
for Equation~\eqref{main_eq} in the covering \eqref{two_parameter_covering}.
Substituting for $q=(1-\mu\,t)^{1/4}\,v^{1/2}$
into~\eqref{two_parameter_covering} yields
\begin{equation*}
  \left\{
    \begin{array}{lcl}
      v_t &=&
              \displaystyle{
              \frac{\lambda+\mu\,y}{1-\mu\,t}\,v_y+2\,u_y\,v_x-\left(2\,u_{xy}-\frac{\mu}{1-\mu\,t}\right)\,v},
      \\
      v_{xx} &=& \displaystyle{\frac{v_x^2}{2\,v}+ \left(2\,u_x-\frac{\lambda+\mu\,y}{2\,(1-\mu\,t)}\right)\,v}.
    \end{array}
  \right.
\end{equation*}
The second equation of this system becomes linear after differentiating with
respect to $x$. Therefore we obtain the covering
\begin{equation}
  \left\{
    \begin{array}{lcl}
      v_t &=&
              \displaystyle{
              \frac{\lambda+\mu\,y}{1-\mu\,t}\,v_y+2\,u_y\,v_x-\left(2\,u_{xy}-\frac{\mu}{1-\mu\,t}\right)\,v},
      \\
      v_{xxx} &=& \displaystyle{\left(4\,u_x-\frac{\lambda+\mu\,y}{1-\mu\,t}\right)\,v_x+2\,u_{xx}\,v}
    \end{array}
  \right.
  \label{two_parameter_shadow_generated_covering}
\end{equation}
which is linear with respect to the shadow $v$.

\section{Recursion operators}

Expanding $v$ into the Laurent series with respect to $\lambda$ and
substituting into \eqref{two_parameter_shadow_generated_covering}, gives the
system
\begin{equation}
  \left\{
    \begin{array}{lcl}
      \Psi_x &=& \displaystyle{(1-\mu\,t)\,(4\,u_x\,\Phi_x+2\,u_{xx}\,\Phi-\Phi_{xxx})-\mu\,y\,\Phi_x},
      \\[2pt]
      \Psi_y &=&
                 \displaystyle{(1-\mu\,t)\,(\Phi_t-2\,u_y\,\Phi_x+2\,u_{xy}\,\Phi)-\mu\,(y\,\Phi_y+\Phi)}.
    \end{array}
  \right.
  \label{RO_mu}
\end{equation}

\begin{proposition}\label{prop:recursion-operators}
  System~\eqref{RO_mu} defines a family of recursion operators $\Psi
  =\mathcal{R}_\mu(\Phi)$ for symmetries of Equation~\eqref{main_eq}.
\end{proposition}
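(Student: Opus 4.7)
The strategy follows the general scheme recalled in the Coordinates block of Section~\ref{sec:preliminaries}: a B\"acklund auto-transformation of the tangent equation $\mathcal{T}\mathcal{E}$ of~\eqref{main_eq} is, by construction, a recursion operator for symmetries of~\eqref{main_eq}. Treating $\Phi$ as the odd fiber variable $p$ of $\mathcal{T}\mathcal{E}$ and $\Psi$ as the odd fiber variable $\bar p$ of a second copy of $\mathcal{T}\mathcal{E}$, system~\eqref{RO_mu} already has the precise shape of~\eqref{eq:11}--\eqref{eq:12}. Hence to prove the proposition I have to establish two assertions: (i)~the two equations in~\eqref{RO_mu} are compatible as an overdetermined system for $\Psi$ over $\mathcal{T}\mathcal{E}$; (ii)~any $\Psi$ produced by~\eqref{RO_mu} from a shadow $\Phi$ is itself a shadow, i.e.\ satisfies the lift of $\ell_F(\Psi)=0$ to the relevant covering.

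For~(i), I would compute $D_y(\Psi_x)-D_x(\Psi_y)$ from the right-hand sides of~\eqref{RO_mu}, apply the total derivatives entry by entry, and reduce modulo~\eqref{main_eq} together with its differential consequences (to eliminate all $t$-derivatives of $u$) and modulo $\ell_F(\Phi)=0$ together with its differential consequences (to eliminate all $t$-derivatives of $\Phi$). After these substitutions the remaining expression must vanish identically in the internal jet coordinates. Setting $\mu=0$ gives a simpler sub-operator against which the bookkeeping can be cross-checked.

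For~(ii), the cleanest route is conceptual rather than direct: system~\eqref{two_parameter_shadow_generated_covering} has already been shown to be a covering of~\eqref{main_eq} that is linear in the shadow~$v$, and~\eqref{RO_mu} is precisely the relation between two consecutive coefficients of the formal Laurent expansion of~$v$ in powers of~$\lambda$. Since each such coefficient is automatically a (possibly nonlocal) shadow of~\eqref{main_eq}, the shadow property of~$\Psi$ follows without a new calculation. Alternatively, it can be verified directly by differentiating~\eqref{RO_mu} to produce expressions for $\Psi_{xt}$ and $\Psi_{xxxy}$ and checking that $\ell_F(\Psi)\equiv 0$ modulo~\eqref{main_eq} and $\ell_F(\Phi)=0$.

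The principal obstacle is~(i): the expansion involves jets of $u$ and of $\Phi$ up to fifth order, and the required cancellations appear only after systematic use of~\eqref{main_eq}, of $\ell_F(\Phi)=0$, and of their $x$-, $y$-, and $t$-derivatives. I would organize the computation by grouping monomials according to their $\Phi$-jet content, eliminating all $t$-derivatives first, and postponing the $\mu$-dependent pieces (which are rational in $(1-\mu t)$ and polynomial in $y$) to the last reconciliation, so that the parameter $\mu$ is carried symbolically throughout and the one-parameter character of the family is manifest in the final answer.
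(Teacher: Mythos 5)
Your proposal is correct, and its decisive ingredient---identifying~\eqref{RO_mu} with the relation between consecutive Laurent coefficients of the shadow~$v$ of~\eqref{two_parameter_shadow_generated_covering}, so that $\Psi$ inherits the shadow property from~$v$---is exactly the paper's one-line proof of Proposition~\ref{prop:recursion-operators}. The separate compatibility verification you plan in step~(i) is subsumed by the same observation, since \eqref{two_parameter_shadow_generated_covering} is already a covering (linear in~$v$) and the expansion in~$\lambda$ preserves this, so the lengthy jet computation you outline is unnecessary, though harmless as a cross-check.
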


\begin{proof}
  This immediately follows from the fact that~$v$
  in~\eqref{two_parameter_shadow_generated_covering} is a shadow.
\end{proof}

Likewise, expanding $v$ into the Laurent series with respect to $\mu$ gives
the family of the recursion operators $\varphi =\mathcal{Q}_\lambda(\psi)$
defined by the system
\begin{equation}
  \left\{
    \begin{array}{lcl}
      \varphi_t &=&
                    \displaystyle{\lambda\,\varphi_y+2\,(u_y\,\varphi_x-
                    u_{xy}\,\varphi)+ 
                    t\,(\psi_t-2\,u_y\,\psi_x)+y\,\psi_y}
      \\
                &&
                   \displaystyle{+\,(1+2\,t\,u_{xy})\psi},
      \\
      \varphi_{xxx} &=& \displaystyle{
                        (4\,u_x-\lambda)\,\varphi_x +
                        2\,u_{xx}\,\varphi+t\,\psi_{xxx}
                        -(4\,t\,u_x+y)\,\psi_x-2\,t\,u_{xx}\,\psi}.  
    \end{array}
  \right.
  \label{RO_lambda}
\end{equation}

\begin{remark}
    By technical reasons, different authors rewrite Equation~\eqref{main_eq}
    in pseudo-evolutionary form
    \begin{equation}
      \label{eq:4}
      u_{t}=  D_x^{-1}(2\,u_y\,u_{xx}+4\,u_x\,u_{xy} -u_{xxxy}).
    \end{equation}
    Let us analyse what happens with the main integrability invariants when
    passing from the initial equation to~\eqref{eq:4}. To this end, recall the
    following equalities (see,~\cite{GKKV}, e.g.)
    \begin{gather*}
      \ell_{\mathcal{E}}\circ\mathcal{H} =
      \Delta_{\mathcal{H}}\circ\ell_{\mathcal{E}}^*, \\
      \ell_{\mathcal{E}}^*\circ\mathcal{S} =
      \Delta_{\mathcal{S}}\circ\ell_{\mathcal{E}}, \\
      \ell_{\mathcal{E}}\circ\mathcal{R} =
      \Delta_{\mathcal{R}}\circ\ell_{\mathcal{E}} 
    \end{gather*}
    that hold for Hamiltonian ($\mathcal{H}$), symplectic ($\mathcal{S}$), and
    recursion operators ($\mathcal{R}$) of an equation~$\mathcal{E}$,
    respectively. Here $\Delta_{\mathcal{H}}$, $\Delta_{\mathcal{S}}$,
    $\Delta_{\mathcal{R}}$ are some linear differential operators in total
    derivatives.
    
    Assume now that $\mathcal{E} = \{F = 0\}$ for a vector-function~$F$
    and~$F = D_x(\tilde{F})$. Let~$\tilde{\mathcal{E}}
    =\{\tilde{F}=0\}$. Then, since~$\ell_{F} = D_x\circ\ell_{\tilde{F}}$ and,
    consequently,~$\ell_{F}^* = -\ell_{\tilde{F}}^*\circ D_x$, we have
   \begin{gather*}
     \ell_{\tilde{\mathcal{E}}}\circ(\mathcal{H}\circ D_x^{-1}) =
     -D_x^{-1}\circ\Delta_{\mathcal{H}}\circ\ell_{\tilde{\mathcal{E}}}^*, \\
     \ell_{\tilde{\mathcal{E}}}^*\circ (D_x\circ\mathcal{S}) =
     -\Delta_{\mathcal{S}}\circ \circ D_x\ell_{\tilde{\mathcal{E}}}, \\
     \ell_{\tilde{\mathcal{E}}}\circ\mathcal{R} =
     D_x^{-1}\circ\Delta_{\mathcal{R}}\circ D_x\circ\ell_{\tilde{\mathcal{E}}}.
   \end{gather*}
   Thus, $\tilde{\mathcal{H}} = \mathcal{H}\circ D_x^{-1}$,
   $\tilde{S} = D_x\circ\mathcal{S}$, $\tilde{\mathcal{R}} = \mathcal{R}$ are
   the corresponding structures for~$\tilde{\mathcal{E}}$.

   Let us come back to Equations~\eqref{main_eq}
   and~\eqref{eq:4}. Since~\eqref{main_eq} is a Lagrangian equation, the
   identity map is both Hamiltonian and symplectic operator for
   it. Consequently, the identity map
   transforms to the nonlocal Hamiltonian
   operator~$D_x^{-1}$ for~\eqref{main_eq} and to the local symplectic
   one~$D_x$.

   Now, we rewrite, using the first equation in~\eqref{RO_mu}, the recursion
   operator~$\mathcal{R}_\mu$ (which is Hamiltonian and symplectic ones 
   simultaneously) in a not rigorous, but ``traditional'' form
   \begin{equation*}
     \mathcal{R}_\mu = 2\gamma D_x^{-1}\circ u_x\circ D_x + 2\gamma(u_x -
     D_x^2) - \mu y,
   \end{equation*}
   where $\gamma = 1 - \mu t$. Due to the above said,
   \begin{equation*}
     \tilde{\mathcal{H}}_\mu = 2\gamma(D_x^{-1}\circ u_x + u_x\circ D_x^{-1})
     - \mu y D_x^{-1} + 2\gamma D_x
   \end{equation*}
   and
   \begin{equation*}
     \tilde{\mathcal{S}} = 2\gamma(u_x\circ D_x + D_x\circ u_x - D_x^3) - \mu
     y D_x
   \end{equation*}
   are a Hamiltonian and symplectic operators for~\eqref{eq:4}. Remarkably,
   Equation~\eqref{eq:4} admits two \emph{local} symplectic structures.
   \hfill $\diamond$
\end{remark}

\section{Action of the recursion operator  $\mathcal{R}_\mu$ and   higher
  symmetries} 

We have 
\begin{equation*}
  \begin{array}{rcl}
    \mathcal{R}_\mu(\varphi_1(A))
    &=& -2\,\varphi_0((1-\mu\,t)\,A),
    \\[3pt]
    \mathcal{R}_\mu(\psi_1)
    &=& -(1-\mu\,t)\,w_t-2\,(1-\mu\,t)\,u_t\,u_x+\mu\,y\,u_t,
    \\[3pt]
    \mathcal{R}_\mu(\psi_2)
    &=& 
        -(1-\mu\,t)\,t\,w_t-\frac{3}{2}\,(1-\mu\,t)\,w
        +\frac{1}{2}\,(1-\mu\,t)\,x\,u_{xxx}
    \\
    &&
       +\mu\,t\,y\,u_t
       -2\,t\,(1-\mu\,t)\,u_x\,(u+2\,t\,u_t)
       +\frac{1}{2}\,\mu\,y\,(x\,u_x+u)
    \\
    &&
       -\frac{3}{2}\,(1-\mu\,t)\,x\,u_x^2,
    \\[3pt]
    \mathcal{R}_\mu(\psi_3)
    &=& 
        -t^2\,(1-\mu\,t)\,w_t-\frac{3}{2}\,t\,(1-\mu\,t)\,w
        +\frac{1}{4}\,\mu\,y^2\,(x+4\,t\,u_y)
    \\
    &&
       +\frac{1}{2}\,t\,(1-\mu\,t)\,
       (x\,u_{xxx}
       -3\,x\,u_x^2
       -2\,t\,u_t\,u_x
       -\,u\,u_x)
    \\
    &&
       +\frac{1}{2}\,(2\,\mu\,t-1)\,y\,(x\,u_x+2\,t\,u_t+u),
    \\[3pt]
    \mathcal{R}_\mu(\psi_4)
    &=& 
        \frac{3}{2}\,(1-\mu\,t)\,w
        -\frac{1}{2}\,\mu\,y\,(x\,u_x-2\,y\,u_x+u)
    \\
    &&
       +\frac{1}{2}\,(1-\mu\,t)\,(2\,u\,u_x-x\,u_{xxx}+3\,x\,u_x^2-2\,y\,u_t),
    \\[3pt]
    \mathcal{R}_\mu(\psi_5)
    &=&\psi_1-\mu\,(\psi_2+\psi_4),
    \\[3pt]
    \mathcal{R}_\mu(\psi_6)
    &=&(1-\mu\,t)\,\psi_2-\mu\,y\,\psi_6,
  \end{array}
\end{equation*}
where $w$ is a nonlocal variable defined by the system
\begin{equation*}
  \left\{
    \begin{array}{lcl}
      w_x&=&u_x^2-u_{xxx},
             
      \\[3pt]
      w_y&=& u_t-2\,u_x\,u_y.
    \end{array}
  \right.
\end{equation*}
The iterated actions of $\mathcal{R}_0$ on $\chi_1 := u_x=-\varphi_0(1)$ gives
higher symmetries of Equation~\eqref{main_eq}. We get
\begin{equation*}
  \mathcal{R}_0(\chi_{2\,k+1}) = -\chi_{2\,k+3}, \qquad k \ge 0
\end{equation*}
with
\begin{equation*}
  \begin{array}{lcl}
    \chi_3
    &=& u_{xxx} -3\,u_x^2, 
    \\[3pt]
    \chi_5
    &=&u_{5x} -5\, (2\,u_x\,(u_{xxx} -u_x^2)+u_{xx}^2), 
    \\[3pt]
    \chi_7
    &=&
        u_{7x}-7\,(2\,u_x\,u_{5x}+4\,u_{xx}\,u_{4x}+3\,u_{xxx}^2-
        10\,(u_x^2\,u_{xxx}+u_{xx}^2)+5\,u_x^4),  
    \\[3pt]
    \chi_9
    &=& u_{9x}-18\,(u_x\,u_{7x}+3\,u_{xx}\,u_{6x})-6\,(19\,u_{xxx}-
        21\,u_x^2)\,u_{5x}   
    \\[3pt]
    &&
       \qquad 
       -\,3\,u_{4x}\,(23\,u_{4x}-168\,u_x\,u_{xxx})
       +\,126\,u_x^2\,(u_x^3-5\,u_{xx}^2)
    \\[3pt]
    &&
       \qquad 
       +\,42\,u_{xxx}\,(9\,u_x\,u_{xxx}+11\,u_{xx}^2-10\,u_x^3),
  \end{array}
\end{equation*}
etc.  The iterated actions of $\mathcal{R}_\mu$ on $\varphi_0(A)$ can be
expressed as the linear combinations of the symmetries $\chi_{2\,k+1}$ with
the coefficients that depend on $y$, $A$, and the derivatives of $A$ with
respect to $t$. For example,
\begin{equation*}\fl
\mathcal{R}_\mu(\varphi_0(A)) = 
(1-\mu\,t)\,A\,\chi_3 -\mu\,y\,\varphi_0(t\,A^{\prime}+A)
-y\,A^{\prime}\,u_x
 -\frac{1}{4}\,y^2\,(2\,\mu\,A^{\prime}+(1+\mu\,t)\,A^{\prime\prime}).
\end{equation*}

\begin{remark}
  Since {\sc CBS} is Lagrangian, the symmetries are simultaneously
  cosymmetries. The algorithm for generating conservation laws associated to
  cosymmetries~$\chi_{2\,k+1}$ is presented in~\cite[Ch. III, \S
  2]{Bogoyavlenskii}.  \hfill $\diamond$
\end{remark}
An immediate consequence of this remark is
\begin{proposition}\label{sec:acti-recurs-oper-loc}
  All the symmetries~$\chi_{2\,k+1}$, $k = 0$, $1$,\dots are local.
\end{proposition}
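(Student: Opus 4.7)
My plan is to read off locality of $\chi_{2k+1}$ directly from the Remark together with Bogoyavlenskii's explicit construction.

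First, recall from the Remark that CBS is Lagrangian, so its linearization $\ell_{\mathcal{E}}$ is formally self-adjoint and symmetries are canonically identified with cosymmetries; in particular, every local cosymmetry is simultaneously a local symmetry. Second, in [Bogoyavlenskii, Ch.III, \S 2] an infinite family of \emph{local} cosymmetries of CBS, of increasing odd order $2k+1$, is constructed explicitly via a generalized Miura transformation; each is given by a differential polynomial in $u$ and is converted, by the algorithm of that section, into a local conservation law.

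To finish, I would identify these local cosymmetries with the $\chi_{2k+1}$ produced in this paper by iterated application of $\mathcal{R}_0$ to $\chi_1 = u_x$. An induction using the first equation of~(\ref{RO_mu}) with $\mu = 0$ shows that the leading $x$-derivative of the $k$-th iterate is $u_{(2k+1)x}$, which is exactly the leading order of Bogoyavlenskii's $2k+1$-st cosymmetry. Since a symmetry of fixed odd order is determined, up to an overall scalar and lower-order corrections (which are local by the inductive hypothesis), by its leading term, the two families coincide, and each $\chi_{2k+1}$ inherits the locality of Bogoyavlenskii's expression.

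The main and essentially only obstacle is the identification step in the previous paragraph: one must verify that the space of local cosymmetries of CBS of bounded order, modulo those of strictly lower order, is at most one-dimensional at each odd order $2k+1$. This is a routine order-by-order check directly from the explicit form of $\ell_F^*$ for~(\ref{main_eq}), but it has to be done carefully at every step of the induction before one may legitimately say that the proposition is an \emph{immediate} consequence of the Remark.
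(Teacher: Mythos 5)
Your ingredient list (Lagrangian structure identifies symmetries with cosymmetries; Bogoyavlenskii supplies local objects) is the same as in the paper's Remark, but the step you yourself single out as the crux --- identifying the iterates of $\mathcal{R}_0$ with Bogoyavlenskii's local family via leading terms plus a uniqueness claim --- is not only left unproved, it cannot work in the form you state it. The uniqueness you appeal to concerns \emph{local} cosymmetries of bounded order, whereas the object whose locality is in question, $\chi_{2k+3}=-\mathcal{R}_0(\chi_{2k+1})$, is a priori only a nonlocal shadow: it is defined by the quadratures \eqref{RO_mu}, and for other seeds the outcome genuinely involves the nonlocal variable $w$ (see $\mathcal{R}_\mu(\psi_1),\dots,\mathcal{R}_\mu(\psi_4)$ in the paper). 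To place $\mathcal{R}_0(\chi_{2k+1})$ inside the class where your uniqueness argument applies you must already know it is local, which is exactly what is to be proved; what you would actually need is a classification of shadows in the covering containing $w$, a substantially stronger statement. Moreover, the uniqueness claim is false as stated for this $3$D equation: the symmetry algebra contains the functional family $\varphi_0(A)$, and specializing the paper's formula to $\mu=0$ gives $\mathcal{R}_0(\varphi_0(A)) = A\,\chi_3 - y\,A'\,u_x - \frac{1}{4}\,y^2 A''$, so already at order three the space of local symmetries modulo lower-order ones contains $A(t)\,u_{xxx}$ for arbitrary $A$ and is infinite-dimensional. You would at least have to restrict to symmetries given by differential polynomials in the $x$-derivatives with constant coefficients and then prove one-dimensionality there; none of this ``routine order-by-order check'' is carried out.

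The paper's (terse) argument runs through the Remark differently and avoids any identification with an independently constructed hierarchy: each $\chi_{2k+1}$, once known to be local, is simultaneously a local \emph{cosymmetry} because the equation is Lagrangian, and Bogoyavlenskii's algorithm attaches to this cosymmetry a local conservation law; that conservation law is precisely what guarantees that the right-hand sides of \eqref{RO_mu} with $\mu=0$ can be integrated within local functions, so that the next iterate $\chi_{2k+3}$ may be chosen local, and induction on $k$ completes the proof. In other words, locality is propagated step by step through the cosymmetry/conservation-law mechanism, not read off from matching leading terms with a pre-existing family; if you want to salvage your route, you must either prove the shadow-level uniqueness statement indicated above or switch to this inductive argument.
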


\begin{remark}
  The first equation of system \eqref{RO_mu} with $\mu=0$ is equivalent to the
  recursion operator for equation $u_t=\chi_3$ from~\cite{Fokas}, cf.~the
  footnote in~\cite[p.~401]{Schiff}.  \hfill $\diamond$
\end{remark}

\begin{proposition}\label{prop:acti-recurs-oper}
  The hierarchy~$\{\chi_{2k+1}\}$ is commutative.
\end{proposition}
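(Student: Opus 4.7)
The plan is to derive commutativity of $\{\chi_{2k+1}\}$ from the hereditary (Nijenhuis) property of $\mathcal{R}_0$ and then to run the classical Magri--Fuchssteiner mechanism, which manufactures an abelian hierarchy out of a hereditary recursion operator and a suitably invariant seed. Concretely, I would first invoke the pairwise Nijenhuis compatibility $\ldb\mathcal{R}_\mu,\mathcal{R}_\nu\rdb=0$ of the family $\{\mathcal{R}_\mu\}$ to be established in the forthcoming Section~\ref{sec:nijenhuis-brackets}; specialising $\mu=\nu=0$ yields $\ldb\mathcal{R}_0,\mathcal{R}_0\rdb=0$, i.e.\ the Nijenhuis identity
\[
\{\mathcal{R}_0\phi,\mathcal{R}_0\psi\}=\mathcal{R}_0\bigl(\{\mathcal{R}_0\phi,\psi\}+\{\phi,\mathcal{R}_0\psi\}\bigr)-\mathcal{R}_0^{\,2}\{\phi,\psi\}
\]
for arbitrary (shadow) symmetries $\phi,\psi$, read in a sufficiently large nonlocal extension where all the ingredients are simultaneously defined.

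Next I would isolate a seed with the required invariance. Since $\chi_1=u_x$ generates the $x$-translation and an inspection of the $\mu=0$ specialisation of~\eqref{RO_mu} shows that the coefficients of $\mathcal{R}_0$ contain no explicit occurrence of $x$, the Lie derivative $L_{\chi_1}\mathcal{R}_0$ vanishes, equivalently
\[
\{\chi_1,\mathcal{R}_0\phi\}=\mathcal{R}_0\{\chi_1,\phi\}\qquad\text{for every }\phi.
\]
Writing $\chi_{2k+1}=(-\mathcal{R}_0)^k\chi_1$, a first induction on $k$ that combines this $\chi_1$-invariance with the Nijenhuis identity yields $L_{\chi_{2k+1}}\mathcal{R}_0=0$ for every $k\ge 0$; a second induction on $\min(i,j)$, again using the Nijenhuis identity together with the just-established $L_{\chi_{2i+1}}\mathcal{R}_0=0$ and the base case $\{\chi_1,\chi_{2j+1}\}=0$ (which follows from $x$-autonomy of the $\chi_{2j+1}$), delivers $\{\chi_{2i+1},\chi_{2j+1}\}=0$ for all $i,j$. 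Proposition~\ref{sec:acti-recurs-oper-loc} ensures that these Jacobi brackets live in the algebra of local symmetries.

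The main obstacle is the genuinely nonlocal character of $\mathcal{R}_0$: the Nijenhuis identity has to be interpreted in a chain of coverings over $\mathcal{T}\mathcal{E}$ that simultaneously realises $\phi$, $\mathcal{R}_0\phi$ and $\mathcal{R}_0^{\,2}\phi$, so the Magri--Fuchssteiner induction must be executed inside the covering-theoretic framework of~\cite{JK-Pavlov-M, JK-OM, JK-AV, KVV} rather than in a purely local operator algebra. Locality of the resulting hierarchy is a luxury guaranteed only a posteriori, by Proposition~\ref{sec:acti-recurs-oper-loc}.
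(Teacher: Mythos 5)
Your argument is essentially the paper's: the authors likewise combine the invariance of the recursion operator under the seed $\chi_1=u_x$, the Nijenhuis property of $\mathcal{R}_0$ coming from Theorem~\ref{thm:nijenhuis-brackets}, and locality of the $\chi_{2k+1}$ from Proposition~\ref{sec:acti-recurs-oper-loc}, citing~\cite{JK-flat} for the hereditary-operator (Magri--Fuchssteiner) induction that you spell out explicitly. So your proposal is correct and follows the same route, merely making the quoted general mechanism explicit.
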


\begin{proof}
  Obviously, the recursion operator that generates the hierarchy at hand is
  invariant with respect to the seed symmetry~$\chi_1 = u_x$. Then the result
  follows immediately from Proposition~\ref{sec:acti-recurs-oper-loc} and
  Theorem~\ref{thm:nijenhuis-brackets} of Section~\ref{sec:nijenhuis-brackets}
  and~\cite{JK-flat}.
\end{proof}

\section{Nijenhuis brackets}
\label{sec:nijenhuis-brackets}
We prove here that all the recursion operators~$\mathcal{R}_\mu$ given
by~\eqref{RO_mu} pair-wise commute with respect to the Nijenhuis bracket. The
proof is based on the method used previously in~\cite{JK-Pavlov-M, JK-OM,
  JK-AV} and described there in detail.

To simplify subsequent computations, we rewrite system~\eqref{RO_mu} in a more
convenient, though less transparent form:
\begin{equation}
  \label{eq:5}
  \left\{
  \begin{array}{lcl}
    w_x&=&2u_xq_x,\\
    w_y&=&q_t + q_{xxy} - 2u_xq_y - 2u_yq_x
  \end{array}\right.
\end{equation}
and
\begin{equation}
  \label{eq:6}
  S^u = (2\gamma u_x - \mu y)q - \gamma q_{xx} + \gamma w,
\end{equation}
where, as above, $\gamma = 1 - \mu t$. Obviously,~\eqref{RO_mu} is a
consequence of Equations~\eqref{eq:5} and~\eqref{eq:6} and thus, due to
Proposition~\ref{prop:recursion-operators}, $S^u$ is a shadow of a symmetry
whenever~$q$ is a symmetry.

The above construction has the following geometric interpretation. Consider
the tangent equation~$\mathcal{T}\mathcal{E}$
\begin{equation*}
  \left\{
  \begin{array}{rcl}
    u_{tx}&=& 2\,u_y\,u_{xx}+4\,u_x\,u_{xy} -u_{xxxy},\\
    q_{tx}&=& 2\,u_y\,q_{xx}+2\,u_{xx}\,q_y
              +4\,u_x\,q_{xy}+4\,u_{xy}\,q_x - q_{xxxy}
  \end{array}\right.
\end{equation*}
to Equation~\eqref{main_eq} and the tangent covering
$\mathbf{t}\colon \mathcal{T}\mathcal{E}\to \mathcal{E}$,
$(u_\sigma, q_\sigma)\mapsto (u_\sigma)$ (see~\cite{KVV} for more details on
the tangent covering). Then the following fact holds:

\begin{proposition}
  System~\eqref{eq:5} defines a covering
  $\tau\colon W\to \mathcal{T}\mathcal{E}$\textup{,} while $S^u$ is a symmetry
  shadow in the composition~$\mathbf{t}\circ\tau$.
\end{proposition}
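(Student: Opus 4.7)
The statement has two parts: (i) that \eqref{eq:5} is a covering $\tau\colon W\to\mathcal{T}\mathcal{E}$, and (ii) that $S^u$ is a symmetry shadow in the composition $\mathbf{t}\circ\tau$. I would prove them in that order.

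For (i), since the right-hand sides of \eqref{eq:5} do not depend on $w$, the Wahlquist--Estabrook commutator $[X_x,X_y]$ of the two nonlocal vector fields vanishes automatically, and the covering condition reduces to the single scalar identity
\[
D_y(2u_xq_x)=D_x\bigl(q_t+q_{xxy}-2u_xq_y-2u_yq_x\bigr).
\]
Expanding both sides by Leibniz and cancelling the $2u_xq_{xy}+2u_{xy}q_x$ terms produces exactly the linearized relation $q_{tx}=2u_yq_{xx}+2u_{xx}q_y+4u_xq_{xy}+4u_{xy}q_x-q_{xxxy}$ defining $\mathcal{T}\mathcal{E}$, so the identity holds on $\mathcal{T}\mathcal{E}$ and (i) follows.

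For (ii), I would avoid verifying $\widetilde{\ell}_F(S^u)=0$ head-on and instead reduce to Proposition~\ref{prop:recursion-operators}. The form \eqref{eq:6} is arranged so that differentiating $S^u$ once in $x$ (resp.~in $y$) and eliminating $w_x$, $w_y$ via \eqref{eq:5} reproduces, after collecting the factor $\gamma=1-\mu t$ and the $\mu$-linear correction, the first (resp.~second) equation of \eqref{RO_mu} with $\Phi=q$ and $\Psi=S^u$. Hence, on $W$, the pair $(q,S^u)$ satisfies \eqref{RO_mu}, and Proposition~\ref{prop:recursion-operators} immediately yields that $S^u$ is a shadow of a symmetry of $\mathcal{E}$ whenever $q$ lies on $\mathcal{T}\mathcal{E}$. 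This is exactly the shadow property in the composite covering $\mathbf{t}\circ\tau$.

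The real computational substance sits in step~(i), and even there only a single compatibility identity has to be checked; no serious obstacle arises. In step~(ii) the only bookkeeping care is with the factor $\gamma$ and the two $\mu$-corrections, but once the $D_x$-- and $D_y$--derivatives of \eqref{eq:6} are matched with the two equations of \eqref{RO_mu}, the statement follows formally from an already established proposition.
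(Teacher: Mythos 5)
Your proposal is correct and takes essentially the paper's own route: the ``direct computations'' behind the proposition are precisely your compatibility check $D_y(2u_xq_x)=D_x(q_t+q_{xxy}-2u_xq_y-2u_yq_x)$ reducing, modulo the linearized equation defining $\mathcal{T}\mathcal{E}$, to an identity, while the shadow claim is obtained exactly as in the text preceding the proposition, by observing that \eqref{eq:5}--\eqref{eq:6} reproduce \eqref{RO_mu} with $\Phi=q$, $\Psi=S^u$ and invoking Proposition~\ref{prop:recursion-operators}. One cosmetic slip: the terms $2u_xq_{xy}+2u_{xy}q_x$ do not cancel but combine (appearing with opposite signs on the two sides) to produce the coefficients $4u_xq_{xy}+4u_{xy}q_x$ in the linearized relation, which is the conclusion you correctly state.
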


\begin{proof}
  Direct computations.
\end{proof}

\begin{proposition}
  The shadow $S^u$ given by Equation~\eqref{eq:6} admits a unique lift
  to~$W$. 
\end{proposition}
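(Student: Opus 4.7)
The plan is to reduce the existence of a lift to an integrability check for a first-order overdetermined system on the unknown $S^w$, and then solve that system. A lift of the shadow $S^u$ to the covering $\tau\colon W\to\mathcal{T}\mathcal{E}$ amounts to prescribing a function $S^w$ on $W$, namely the action of the would-be symmetry on the nonlocal variable~$w$, so that the evolutionary derivation with components $S^u$ on the $u$-variables, the tangent-covering component $S^q$ on the odd fibre coordinate~$q$, and $S^w$ on~$w$, preserves both relations of~\eqref{eq:5}. Pushing the evolutionary field through~\eqref{eq:5} yields
\begin{align*}
  \tilde D_x(S^w) &= 2\,\tilde D_x(S^u)\,q_x + 2u_x\,\tilde D_x(S^q),\\
  \tilde D_y(S^w) &= \tilde D_t(S^q) + \tilde D_{xxy}(S^q) - 2\,\tilde D_x(S^u)\,q_y \\
                  &\quad - 2u_x\,\tilde D_y(S^q) - 2\,\tilde D_y(S^u)\,q_x - 2u_y\,\tilde D_x(S^q),
\end{align*}
in which the right-hand sides become fully explicit after substituting~\eqref{eq:6}.

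First, I compute both right-hand sides by plugging in~\eqref{eq:6} and eliminating $u_{tx}$ and $q_{tx}$ by means of~\eqref{main_eq} and its linearization, and reducing derivatives of $w$ by~\eqref{eq:5}. Next, I verify the integrability condition that $\tilde D_y$ of the first right-hand side agrees with $\tilde D_x$ of the second, modulo the equations of~$W$. This is the heart of the argument and the step I expect to be the main obstacle, since it requires that sizeable polynomial expressions in $u_\sigma$, $q_\sigma$ and $w$ collapse identically. The cancellation does take place, essentially because $S^u$ was already verified to be a shadow of the composition $\mathbf{t}\circ\tau$: the identity $\tilde\ell_{\mathcal{E}}(S^u)=0$ provides exactly the relations needed for the two right-hand sides above to satisfy the cross-compatibility $[\tilde D_x,\tilde D_y]S^w = 0$.

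Finally, once compatibility is established, $S^w$ is obtained by term-by-term integration of the first equation along~$x$, with the residual freedom fixed by the second equation. Uniqueness follows immediately: the difference of two lifts is annihilated by both $\tilde D_x$ and $\tilde D_y$ and so is a constant on~$W$; demanding that the lift share the homogeneous polynomial structure of $S^u$ in the odd variables $q_\sigma$ and $w$ forces this constant to vanish, yielding the unique lift claimed by the proposition.
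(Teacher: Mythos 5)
There is a genuine gap: you treat the $q$-component of the lift as already given (``the tangent-covering component $S^q$''), but determining $S^q$ is part of the lifting problem and is precisely where both existence and uniqueness are decided. A lift of the shadow to $W$ is a vector field with components on the $u_\sigma$, the odd variables $q_\sigma$, \emph{and} $w$; the shadow~\eqref{eq:6} only supplies the $u$-component. Since $S^q$ enters your right-hand sides (through $2u_x\tilde D_x(S^q)$, $\tilde D_t(S^q)+\tilde D_{xxy}(S^q)$, etc.), they are not ``fully explicit after substituting~\eqref{eq:6}'', the compatibility check you describe cannot even be set up, and its justification by $\tilde\ell_{\mathcal{E}}(S^u)=0$ alone is insufficient: one also needs the pair $(S^u,S^q)$ to satisfy the lifted linearization of the second (tangent) equation of $\mathcal{T}\mathcal{E}$. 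The paper closes exactly this gap first: it lifts the canonical odd field $\Cd$ to $W$ by showing $D_x(\Cd w)=2q_xq_x=0$ and $D_y(\Cd w)=-2(q_xq_y+q_yq_x)=0$, hence $\Cd w=0$, and then uses the fact that $\Cd$ (anti)commutes with all symmetries to \emph{force} $S^q=S(\Cd u)=-\Cd(S^u)=2\gamma qq_x$. Only after that does the paper determine $S^w$ by the kind of integration you propose, arriving at~\eqref{eq:7}.

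Your uniqueness argument suffers from the same omission: the difference of two lifts with the same $S^u$ could a priori differ already in the $q$-component by a $\tau$-vertical piece, and your ``difference is a constant'' reasoning only applies once the $q$-components are known to coincide. The $\Cd$-argument is what pins $S^q$ down uniquely; after that, your observation that a function killed by $\tilde D_x$ and $\tilde D_y$ and homogeneous of odd degree in $q_\sigma$, $w$ must vanish is the right way to finish for $S^w$. So the overall architecture (prescribe components, integrate, check cross-derivatives) is sound, but without constructing $S^q$ via the lifted $\Cd$ the proof does not go through as written.
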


\begin{proof}
  To begin with, let us recall two basis facts on~$\mathcal{T}\mathcal{E}$:
  \begin{enumerate}
  \item the variables $q_\sigma$ and $w$ are odd of parity~$1$, i.e., they
    anti-commute;
  \item the space $\mathcal{T}\mathcal{E}$ is endowed with the nilpotent odd
    vector field~$\Cd$ such that~$\Cd(u_\sigma) = q_\sigma$
    and~$\Cd(q_\sigma) = 0$;
  \item $\Cd$ is a symmetry of parity~$1$ that (anti)commutes with all the
    other symmetries.
  \end{enumerate}
  The lift of $S^u$ has to be of the form
  \begin{equation*}
    S = \sum_\sigma D_\sigma(S^u)\frac{\partial}{\partial u_\sigma} +
    \sum_\sigma D_\sigma(S^{q})\frac{\partial}{\partial q_\sigma} +
    S^w\frac{\partial}{\partial w},
  \end{equation*}
  where summation is accomplished over all admissible
  multi-indices~$\sigma$. So, we must find expressions for~$S^q$ 
  and~$S^w$. But the first necessary step is to lift the field~$\Cd$ to~$W$.

  \textbf{Step~1: lift of~$\Cd$.} Applying $\Cd$ to the first equation
  in~\eqref{eq:5}, one obtains
  \begin{equation*}
    \Cd(w_x) = D_x(\Cd w) = \Cd(2u_xq_x) = 2q_xq_x = 0.
  \end{equation*}
  In a similar way,
  \begin{equation*}
    \Cd(w_y) = D_y(\Cd w) = \Cd(q_t + q_{xxy} -2u_xq_y - 2u_yq_x) = -2(q_xq_y
    + q_yq_x) = 0.
  \end{equation*}
  Thus, $D_x(\Cd(w)) = D_y(\Cd(w)) = 0$ and consequently, $\Cd(w) = 0$.\\[.1pt]

  \textbf{Step~2: construction of~$S^q$.} One has
  \begin{equation*}
    S^q = S(q) = S(\Cd u) = -\Cd(S^u) = -\Cd((2\gamma u_x - \mu u)q - \gamma
    q_{xx} + \gamma w) = -2\gamma q_x q,
  \end{equation*}
  i.e., $S^q = 2\gamma q q_x$.\\[.1pt]

  \textbf{Step~3: construction of~$S^w$.} Let us apply~$S$ to the first
  equation of~\eqref{eq:5}:
  \begin{equation*}
    S(w_x) = D_x(S^w) = S(2u_xq_x) = 2(D_x(S^u)q_x + u_xD_x(S^q)).
  \end{equation*}
  The fist summand in braces is
  \begin{multline*}
    D_x((2\gamma u_x - \mu y)q - \gamma q_{xx} + \gamma w)q_x\\ = (2\gamma
    u_{xx}q + (2\gamma u_x - \mu y)q_x - \gamma q_{xxx} + 2\gamma u_xq_x)q_x =
    \gamma (2u_{xx}qq_x - q_{xxx}q_x)
  \end{multline*}
  while the second one equals
  \begin{equation*}
    u_xD_x(2\gamma qq_x) = 2\gamma u_x(q_xq_x + qq_{xx}) = 2\gamma u_xqq_{xx}.
  \end{equation*}
  Thus,
  \begin{equation*}
    D_x(S^w) = 2\gamma(2u_{xx}qq_x - q_{xxx}q_x + 2u_xqq_{xx}) =
    2\gamma D_x(2u_xqq_x - q_{xx}q_x).
  \end{equation*}
  We omit similar, but quite lengthy computations of~$S(w_y)$.

  Let us collect the results:
  \begin{equation}
    \label{eq:7}
    S^u = (2\gamma u_x - \mu y)q - \gamma q_{xx} + \gamma w,\quad S^q =
    2\gamma q q_x,\quad S^w = 2\gamma(2u_xq - q_{xx})q_x, 
  \end{equation}
  and this finishes the proof.
\end{proof}

\begin{theorem}
  \label{thm:nijenhuis-brackets}
  Let $\mathcal{R}_{\mu}$\textup{,} $\mathcal{R}_{\mu'}$ be two recursion
  operators of the form~\eqref{RO_mu} and $S_\rho$\textup{,} $S_{\rho'}$ be
  the corresponding symmetries of the covering space~$W$. Then
  \begin{equation*}
    \ldb\rho, \rho'\rdb = 0
  \end{equation*}
  for all~$\mu$, $\mu'\in\mathbb{R}$\textup{,} where $\ldb\,\cdot,\cdot\,\rdb$
  denotes the Nijenhuis bracket. 
\end{theorem}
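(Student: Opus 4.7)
The plan is to prove $\ldb\rho,\rho'\rdb=0$ by direct computation of the super-commutator of the lifted symmetries $S_\mu$, $S_{\mu'}$ on the covering space~$W$ constructed in the preceding proposition. Since both symmetries are odd, of parity~$1$, the super-commutator reduces to the anticommutator $S_\mu\circ S_{\mu'}+S_{\mu'}\circ S_\mu$, which I would evaluate on the three generators $u$, $q$, $w$ of~$W$ to obtain the three components $\ldb\rho,\rho'\rdb^u$, $\ldb\rho,\rho'\rdb^q$, $\ldb\rho,\rho'\rdb^w$ that must each vanish identically.

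As a preparatory simplification, I would write $S_\mu=S_0+\mu T$, reading off $T^u$, $T^q$, $T^w$ from~\eqref{eq:7} via $\gamma=1-\mu t$. Bilinearity of the Nijenhuis bracket reduces the theorem to three identities, namely $\ldb\rho_0,\rho_0\rdb=0$, $\ldb\rho_0,T\rdb=0$, and $\ldb T,T\rdb=0$; the last is genuinely nontrivial since the Nijenhuis bracket of parity-$1$ sections is symmetric rather than antisymmetric in its arguments. On~$u$ and~$q$, the computation collapses to a Leibniz-rule expansion of~\eqref{eq:7} using the auxiliary identities $S_\mu(q)=2\gamma qq_x$ and $S_\mu(w)=2\gamma(2u_xq-q_{xx})q_x$ already established. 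The resulting terms are all of the form $qq_\sigma q_\tau$ or $q_\sigma q_\tau+q_\tau q_\sigma$, which vanish by the anticommutativity $q\cdot q=0$ of the odd variables, together with the symmetry of second derivatives.

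The main obstacle is the $w$-component $\ldb\rho,\rho'\rdb^w$. Here one must differentiate the nonlocal combination $S^w_{\mu'}=2\gamma'(2u_xq-q_{xx})q_x$ along $S_\mu$, invoke the defining equations~\eqref{eq:5} for $w_x$ and $w_y$ wherever derivatives of $w$ arise, and then verify that the resulting polynomial in $\mu,\mu',t,y$ with coefficients in the algebra of local variables cancels. In principle, consistency must be checked in both the $x$- and $y$-directions, mirroring the two-part structure of \eqref{eq:5}; in practice, once the $x$-direction is settled, the $y$-direction is forced by the Frobenius integrability of the covering, precisely as in the Step 1 computation $D_x(\Cd w)=D_y(\Cd w)=0$ of the preceding proof. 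The final verification is lengthy but entirely mechanical; I would organise it exactly as in~\cite{JK-Pavlov-M, JK-OM, JK-AV}, treating the $u$-, $q$-, and $w$-components in turn and feeding each identity back into the next as a substitution rule.
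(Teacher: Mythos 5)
Your overall plan---lift both shadows to $W$ via~\eqref{eq:7}, form the anticommutator $S_\mu\circ S_{\mu'}+S_{\mu'}\circ S_\mu$, evaluate it on the generators and simplify using~\eqref{eq:5}---is essentially the computation the paper performs, and your polarization $S_\mu=S_0+\mu T$ with the three symmetric brackets is a harmless reorganization (the paper economizes by observing that a symmetry of $W$ is determined by its $u$-component, so only $\ldb\rho,\rho'\rdb^u$ needs to be computed at all). The genuine flaw is that you have located the difficulty in the wrong place and dismissed the decisive step with an argument that fails. Your claim that on the $u$-component everything reduces to monomials $qq_\sigma q_\tau$ or $q_\sigma q_\tau+q_\tau q_\sigma$ killed by oddness is false: the Leibniz expansion of $S_\mu(S_{\mu'}^u)$ produces \emph{quadratic} monomials $u_xqq_x$, $y\,qq_x$, $qq_{xxx}$, $q_xq_{xx}$ with coefficients in $\gamma,\gamma',\mu,\mu'$, none of which vanishes by parity. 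Their cancellation requires (i) substituting $w_x=2u_xq_x$ from~\eqref{eq:5} inside $D_x(S_\mu^u)$, (ii) honest numerical cancellation of the coefficients of $u_xqq_x$, $qq_{xxx}$, $q_xq_{xx}$, and (iii) adding the second, symmetric term of the bracket: a term proportional to $(\gamma'\mu-\gamma\mu')\,y\,qq_x$ survives in each single summand, and only the sum $S_\mu(S_{\mu'}^u)+S_{\mu'}(S_\mu^u)$ vanishes. Concretely, in your decomposition $S_0(T^u)=-2y\,qq_x$ while $T(S_0^u)=+2y\,qq_x$, so $\ldb\rho_0,T\rdb^u$ is not zero monomial-by-monomial; relying on $q\cdot q=0$ alone would assert a wrong intermediate identity and would skip exactly the cancellation that constitutes the theorem.

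Conversely, the $w$-component you single out as ``the main obstacle'' is the easy part: after the same substitution $w_x=2u_xq_x$, every monomial in $S_\mu(S_{\mu'}^w)$ is cubic in the odd variables with a repeated factor, so each of the two terms vanishes identically. Nor is any Frobenius or $y$-direction consistency argument needed there: the $w$-component of the commutator is simply a function to be expanded and simplified, not a quantity to be reconstructed from its $x$- and $y$-derivatives (integrability enters only when constructing the lift $S^w$ itself, which~\eqref{eq:7} already supplies). The plan is therefore repairable---restrict, as the paper does, to the $u$-component and carry out the expansion in full---but as written the justification of the key step is incorrect and the anticipated effort is aimed at the wrong component.
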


\begin{proof}
  Recall that
  \begin{equation*}
    [S_\rho, S_{\rho'}] = S_{\ldb\rho, \rho'\rdb},
  \end{equation*}
  where $[S_\rho, S_{\rho'}] = S_\rho\circ S_{\rho'} + S_{\rho'}\circ
  S_{\rho}$ is the super-commutator. Consequently,
  \begin{equation*}
    \ldb\rho, \rho'\rdb = S_\rho(\rho') + S_{\rho'}(\rho)
  \end{equation*}
  Since the field $S_\rho = S_\rho^u + S_\rho^q + S_\rho^w$ is completely
  determined by its $u$-component, it suffices to compute the term
  \begin{equation}\label{eq:8}
    S_\rho(S_{\rho'}^u) = S_\rho^u(S_{\rho'}^u) + S_\rho^q(S_{\rho'}^u) +
    S_\rho^w(S_{\rho'}^u). 
  \end{equation}
  Due to~\eqref{eq:7}, the right-hand side of~\eqref{eq:8} is
  \begin{multline*}
    S_\rho(S_{\rho'}^u) = S_\rho((2\gamma'u_x -\mu'y)q - \gamma'q_{xx}
    +\gamma'w) \\
    = \underbrace{2\gamma'D_x(S_\rho^u)q}_{\boxed{1}}
    +  \underbrace{(2\gamma'u_x -\mu'y)S_\rho^q}_{\boxed{2}}
    -\underbrace{\gamma'D_x^2(S_\rho^q)}_{\boxed{3}}
    + \underbrace{\gamma'S_\rho^w}_{\boxed{4}}.
  \end{multline*}
  Now,
  \begin{align*}
    \boxed{2}
    &= 2\gamma(2\gamma'u_x - \mu'y)qq_x,\\
    \boxed{3}
    &= \gamma'D_x^2(2\gamma qq_x) = 2\gamma\gamma'(q_xq_{xx} +
      qq_{xxx}),\\ 
    \boxed{4}
    &= 2\gamma\gamma'(2u_xq - q_{xx})q_x,
      \intertext{while}
      \boxed{1}
    &= 2\gamma'D_x((2\gamma u_x - \mu y)q - \gamma q_{xx} + \gamma w)q\\
    &= 2\gamma'(2\gamma u_{xx}q + (2\gamma u_x - \mu y)q_x - \gamma
      q_{xxx} + 2\gamma u_xq_x)q \\
    &= (8\gamma\gamma' u_xq_x - 2\gamma\gamma'q_{xxx} - 2\gamma'\mu yq_x)q.
  \end{align*}
  Summing up and taking into account odd parity of~$q$ and~$w$, we obtain
  \begin{equation*}
    S_\rho(S_{\rho'}^u) = 2(\gamma'\mu - \gamma\mu')qq_x
  \end{equation*}
  and symmetrically
  \begin{equation*}
    S_{\rho'}(S_{\rho}^u) = 2(\gamma\mu' - \gamma'\mu)qq_x.
  \end{equation*}
  Thus, $\ldb\rho, \rho'\rdb = 0$.
\end{proof}

\section{Concluding remarks}

In late eighties, A.M.~Vinogradov conjectured that multi-dimensional ($n>2$)
nonlinear PDEs generically do not admit higher local
symmetries. In~\cite{Lagr-Deform, JK-OM}, we presented several counterexamples
to this statement, but these examples were quite specific: all the equations
had the form of the cotangent one~\cite{KVV}. Besides, these higher symmetries
were ``standing-alone'' ones, i.e., were not terms of infinite hierarchies.

In this paper that, we constructed a recursion operator for the equation of
breaking solitons and proved that it generates an infinite hierarchy of higher
symmetries that depend on an arbitrary function in~$t$ and proved that a
sub-hierarchy independent of~$t$ is commutative.

We also showed that if \textsc{CBS} is rewritten in the nonlocal evolutionary
form, like it is done in~\cite{Bogoyavlenskii}, it admits two local symplectic
structures. Using these structures, one easily constructs recursion operators
for symmetries and cosymmetries and consequently for conservation laws.

\section*{Acknowledgments}

Computations were done using the \textsc{Jets} software~\cite{Jets}.

\bibliographystyle{amsplain}

\end{document}